\newtheorem{theorem}{Theorem}[section]
\newtheorem{corollary}[theorem]{Corollary}
\newtheorem{lemma}[theorem]{Lemma}
\newtheorem{definition}[theorem]{Definition}
\newtheorem{remark}[theorem]{Remark}
\def\Tr{\mathrm{Tr}}
\def\id{{\bf 1}\!\!{\rm I}}
\title{Entangled lifting of Hidden Markov Models}
\begin{document}


\title{ Matrix Product States in  Quantum Spin Chains}
\maketitle

\centerline{ \author{ Abdessatar Souissi}}

\centerline{Department of Management Information Systems, College of Business and Economics, }
\centerline{Qassim University, Buraydah 51452, Saudi Arabia}
\centerline{\textit{a.souaissi@qu.edu.sa}}
\vskip0.5cm

\centerline{\author{ Amenallah Andolsi}}

\centerline{Nuclear Physics and High Energy Research Unit, Faculty of Sciences of Tunis,}
\centerline{ Tunis El Manar University, Tunis, Tunisia }

\centerline{\textit{amenallah.andolsi@fst.utm.tn }}

\begin{abstract}
In this work, we present a novel representation of matrix product states (MPS) within the framework of quasi-local algebras. By introducing an enhanced compatibility condition, we enable the extension of finite MPS to an infinite-volume state, providing new insights into complex, high-dimensional quantum systems. As an illustrative example, we apply this method to the Greenberger-Horne-Zeilinger (GHZ) state. This approach offers significant potential for advancing theoretical frameworks and practical methodologies in the field of quantum information.
\end{abstract}

\textbf{Keywords:}
 Quantum theory, Matrix Product States, GHZ State,  spin chain,   Entanglement

\textbf{Subjectclass:}{ 46L55, 81P45, 81Q10, 82B10, 81R15}

\section{Introduction}

The study of states on quasi-local algebras has been extensively explored within the frameworks of quantum Markov chains (QMCs) \cite{Ac74, ASS20}, including their extension to quantum Markov chains on trees \cite{MBS161, MS19, MS22, MSH21, MSHA22}, and more recently, in hidden quantum Markov models (HQMMs) \cite{AGLS24C, AGLS24Q, SS23}. This class of quantum processes have been shown to be quite efficient in describing certain quanutm system. Namely, through the application of QMCs  in characterizing the . And investigating QMCs on trees to study phase transitions for quantum spin models  and in connection with quanutm walks. Moreover, HQMMs, represent a promising development in the realm of quantum machine learning, offering potential applications in this rapidly evolving field \cite{GSPEC19, MVVMSH18, Li24}.

Finitely correlated states, introduced in \cite{AccFri83} as specific types of Quantum Markov Chains (QMCs) and subsequently developed further in \cite{fannes, fannes2, fannes3}, are distinguished by the property that their correlation functions can be effectively described within finite-dimensional spaces. It was demonstrated in \cite{fannes} that these states are equivalent to generalized valence bond states. Among these, the states exhibiting pure exponential clustering are particularly notable, as they arise as unique ground states of translation-invariant, finite-range Hamiltonians, which are characterized by a non-zero spectral gap.

Matrix Product States (MPS) \cite{perez2007matrix, verstraete2008matrix, orus2014practical, P19} provide an effective framework for representing wave functions of one-dimensional quantum many-body systems. Initially introduced in the context of Density-Matrix Renormalization Group (DMRG) methods \cite{white1992density, schollwock2011density, Mcc07}, the MPS formalism has since been widely extended and has proven highly effective in a range of machine learning applications \cite{Dh22, Meng23, LiZh23, Han18}.

This paper establishes a robust framework for describing Matrix Product States (MPS) as states on infinite tensor products of matrix algebras, revealing new aspects of their structure and behavior. This framework allows for systematic comparisons between MPS and other classes of states, like entangled Markov chains \cite{AF05, SBS23, AMO06}, especially in terms of degree of entanglement. One significant application is embedding the GHZ state within a quasi-local algebra, showcasing how MPS can represent maximal entanglement. This approach strengthens the links between MPS, QMCs, and finitely correlated states, setting the stage for deeper exploration of Markovian and non-Markovian behaviors in quantum systems. Future work will focus on examining the ergodic properties and phase transitions of quantum spin chains through this MPS framework.

The structure of this paper is organized as follows: we begin by introducing preliminaries on MPS in Section \ref{sect_MPS}, followed by a formalization of the quasi-local algebraic framework in Section \ref{Sect_QLA}. In Section \ref{Sect_Main}, we present the main result, establishing the framework rigorously. Section \ref{Sect_GHZ} then applies this construction to the GHZ state, demonstrating how the quasi-local approach effectively captures its entangled structure and inherent non-Markovian characteristics. Finally, we  summary  the main contributions and future directions in Section \ref{Sect_conclu}.

\section{Preliminaries on matrix  product states}\label{sect_MPS}

In this section, we introduce some key concepts relevant to the main result, including matrix product states (MPS), completely positive maps, and their connection in the context of quantum information theory. We begin by reviewing the notion of a Hilbert space and the representation of quantum states, and then establish the role of matrix product states in describing quantum many-body systems.

Let \( \mathcal{H} \) be a finite-dimensional Hilbert space with an orthonormal basis \( \{ |i\rangle \} \), where \( i = 1, 2, \ldots, d \). A pure quantum state in \( \mathcal{H} \) is represented by a vector \( |\psi\rangle \in \mathcal{H} \), which can be expressed as a linear combination of the basis vectors:
\[
|\psi\rangle = \sum_{i=1}^{d} c_i |i\rangle,
\]
where \( c_i \in \mathbb{C} \) are complex coefficients such that \( \sum_{i=1}^{d} |c_i|^2 = 1 \). The space of linear operators acting on \( \mathcal{H} \), denoted \( \mathcal{B}(\mathcal{H}) \), consists of all linear maps \( A: \mathcal{H} \rightarrow \mathcal{H} \).

For composite systems, we consider a tensor product of Hilbert spaces \( \mathcal{H}^{\otimes n} \), where \( n \) denotes the number of subsystems. In this setting, the state of the system can exhibit entanglement, a crucial feature for understanding quantum correlations.

Matrix product states provide an efficient representation of quantum states in many-body systems, especially in one-dimensional spin chains. An MPS is defined as:
\begin{equation*}
|\psi_n\rangle = \sum_{i_1, i_2, \ldots, i_n} \operatorname{Tr}(A_{i_1}^{[1]} A_{i_2}^{[2]} \cdots A_{i_n}^{[n]}) |i_1, i_2, \ldots, i_n\rangle,
\end{equation*}
where \( A_{i_k}^{[k]} \) are matrices associated with site \( k \), and \( |i_k\rangle \) are the local basis states. The trace ensures that the resulting state remains a vector in the composite Hilbert space \( \mathcal{H}^{\otimes n} \).

MPS have gained significant importance in quantum information theory and condensed matter physics due to their ability to capture ground states of local Hamiltonians with low entanglement. They form the basis for algorithms such as the density matrix renormalization group (DMRG) \cite{white1992density} and tensor network methods \cite{orus2014practical}.

\section{States on quasi-local algebras}\label{Sect_QLA}

Let \( \mathcal{H} \) denote a finite-dimensional Hilbert space with an orthonormal basis \( \{ |i\rangle \}_{i=1}^{d} \). The algebra of bounded linear operators on \( \mathcal{H} \) is represented by \( \mathcal{B} = \mathcal{B}(\mathcal{H}) \), with the identity operator denoted as \( \id \). We consider the infinite tensor product of these algebras:
\[
\mathcal{B}_{\mathbb{N}} = \bigotimes_{n \in \mathbb{N}} \mathcal{B}_n,
\]
where \( \mathcal{B}_n \) is defined as the embedding of \( \mathcal{B} \) into the \( n \)-th position of the tensor product, given explicitly by
\[
\mathcal{B}_n = j_n(\mathcal{B}) = \id \otimes \cdots \otimes \id \otimes \mathcal{B} \otimes \id \otimes \cdots,
\]
with the algebra \( \mathcal{B} \) appearing at the \( n \)-th component, and identity operators occupying the other positions.

To represent algebras for finite subsystems, we define
\[
\mathcal{B}_{[1,n]} = \mathcal{B}_1 \otimes \mathcal{B}_2 \otimes \cdots \otimes \mathcal{B}_n,
\]
which describes the operator algebra for the first \( n \) subsystems. There exists a natural embedding
\[
\mathcal{B}_{[1,n]} \otimes \id_{n+1} \subset \mathcal{B}_{[1,n+1]},
\]
allowing for the inclusion of the algebra when extending from \( n \) to \( n+1 \) components. The set of local observables, which consists of all algebras acting on finite subsystems, is then defined as
\[
\mathcal{B}_{\mathbb{N}, \text{loc}} = \bigcup_{n} \mathcal{B}_{[1,n]}.
\]
It is known that the infinite tensor product algebra \( \mathcal{B}_{\mathbb{N}} \) is the closure of the local algebra \( \mathcal{B}_{\mathbb{N}, \text{loc}} \) in the C$^*$-norm. For more details on quasi-local algebras, please refer to \cite{BR}.

\begin{definition}
A sequence of states \( (\varphi_n)_n \) is considered projective with respect to the increasing sequence of subalgebras \( (\mathcal{B}_{[1,n]})_n \) if it satisfies the following criteria:
\begin{enumerate}
    \item For each integer \( n \), the map \( \varphi_n \) defines a state on \( \mathcal{B}_{[1,n]} \), meaning that it is a positive linear functional that is normalized as follows:
    \[
    \varphi_n: \mathcal{B}_{[1,n]} \rightarrow \mathbb{C}, \quad \varphi_n(\id) = 1, \quad \varphi_n(b^* b) \geq 0, \quad \forall b \in \mathcal{B}_{[1,n]}.
    \]
    This ensures that \( \varphi_n \) is a well-defined state on the finite region consisting of the first \( n \) subsystems.

    \item The sequence satisfies a consistency condition: When \( \varphi_{n+1} \) is restricted to the algebra \( \mathcal{B}_{[1,n]} \), it coincides with the state \( \varphi_n \). Mathematically, this is expressed as:
    \[
    \varphi_{n+1} \big|_{\mathcal{B}_{[1,n]}} = \varphi_n.
    \]
\end{enumerate}

The projectivity condition ensures the compatibility of the states across different system sizes, indicating that the description of any finite subsystem \( [1, n] \) remains consistent when the subsystem is extended to a larger system \( [1, n+k] \).
\end{definition}

\begin{definition}
A sequence $(\varphi_n)_n$ with $\varphi_n$ is a state on $\mathcal{B}_{[1,n]}$  said to be  convergent in the
strongly finite sense if, for any $X\in\mathcal{B}_{\mathbb{N}, loc}$, there exists an integer $N_X \in\mathbb{N}$  such that
\[
\varphi_n(X) = \varphi_{N_X}(X); \qquad \forall n\ge N_X
\]
\end{definition}
Recall that the partial trace $\Tr_{n]}$ is defined on the local algebra $\mathcal{B}_{N; \text{loc}}$ as the linear extension of
\begin{equation}\label{eq_Trn}
\Tr_{n]}(X_{[1,n]} \otimes X_{[n+1, n+k]}) = X_{[1,n]} \Tr(X_{[n+1, n+k]})
\end{equation}
where $k \in \mathbb{N}$, \( X_{[1,n]} \in \mathcal{B}_{[1,n]} \), and \( X_{[n+1,n+k]} \in \mathcal{B}_{[n+1,n+k]} \). Here, $"\Tr"$ represents the normalized trace defined on $\mathcal{B}$ by
\[
\Tr(X) = \frac{1}{d} \sum_{i} \langle i | X | i \rangle
\]
and extended to the algebra $\mathcal{B}_{\mathcal{N}; \text{loc}}$ by
\[
\Tr(X_1 \otimes X_2 \otimes \cdots \otimes X_n) = \Tr(X_1) \Tr(X_2) \cdots \Tr(X_n)
\]

\section{Quasi-local representation of MPS}\label{Sect_Main}

In this section, we establish a rigorous construction for a state on an infinite tensor product of matrix algebras, formulated using a sequence of matrix product operators. This construction allows us to define a state with preserved correlation structure across an infinite lattice, extending naturally to applications in quantum information and many-body physics.
  Let \( \{A_{i_n}^{[n]}\}_{n} \) be a family of \(m \times m\) matrices. Consider the following superposition state
  \begin{equation}\label{psn_n}
|\psi_n\rangle = \sum_{i_1, i_2, \ldots, i_n} \operatorname{Tr}(A_{i_1}^{[1]} A_{i_2}^{[2]} \cdots A_{i_n}^{[n]}) |i_1, i_2, \ldots, i_n\rangle,
\end{equation}
To proceed, we introduce the following technical lemma, which will be useful in subsequent calculations.

\begin{lemma}
   For any \( n, k \in \mathbb{N} \) and for any index sequences \( i_1, i_2, \dots, i_{n+k} \) and \( j_1, j_2, \dots, j_{n+k} \), the following identity holds:
  \begin{equation}\label{eq_trkl}
  \overline{\operatorname{Tr}\left( A_{i_1}^{[1]} A_{i_2}^{[2]} \cdots A_{i_{n+k}}^{[n+k]} \right)} \operatorname{Tr}\left( A_{j_1}^{[1]} A_{j_2}^{[2]} \cdots A_{j_{n+k}}^{[n+k]} \right)
  \end{equation}
  $$
   = \operatorname{Tr}\left[ \left( {A_{i_{n}}^{[n]}}^{\dagger} \cdots {A_{i_{1}}^{[1]}}^{\dagger} \otimes A_{j_1}^{[1]} \cdots A_{j_{n}}^{[n]} \right) \left(
   {A_{i_{n+k}}^{[n+k]}}^{\dagger} \cdots {A_{i_{n+1}}^{[n+1]}}^{\dagger} \otimes A_{i_{n+1}}^{[n+1]} \cdots A_{i_{n+k}}^{[n+k]} \right) \right]
  $$
\end{lemma}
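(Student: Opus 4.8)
The plan is to reduce the identity to three elementary facts about finite matrices---the behaviour of the trace under complex conjugation, the multiplicativity of the trace over tensor products, and the mixed-product rule $(X\otimes Y)(Z\otimes W)=XZ\otimes YW$---together with the cyclicity of the trace. I would start from the left-hand side of \eqref{eq_trkl} and rewrite the conjugated factor first. Since $\overline{\operatorname{Tr}(M)}=\operatorname{Tr}(M^{\dagger})$ and the adjoint reverses products, one has
\begin{equation*}
\overline{\operatorname{Tr}\!\left(A_{i_1}^{[1]}\cdots A_{i_{n+k}}^{[n+k]}\right)}
=\operatorname{Tr}\!\left({A_{i_{n+k}}^{[n+k]}}^{\dagger}\cdots {A_{i_1}^{[1]}}^{\dagger}\right).
\end{equation*}
Setting $C:={A_{i_{n+k}}^{[n+k]}}^{\dagger}\cdots {A_{i_1}^{[1]}}^{\dagger}$ and $D:=A_{j_1}^{[1]}\cdots A_{j_{n+k}}^{[n+k]}$, the left-hand side becomes $\operatorname{Tr}(C)\operatorname{Tr}(D)$, which by multiplicativity of the trace over tensor products equals $\operatorname{Tr}(C\otimes D)$.

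Next I would split each long product at site $n$. Write $P:={A_{i_{n}}^{[n]}}^{\dagger}\cdots {A_{i_1}^{[1]}}^{\dagger}$ and $Q:={A_{i_{n+k}}^{[n+k]}}^{\dagger}\cdots {A_{i_{n+1}}^{[n+1]}}^{\dagger}$, so that $C=QP$; and $R:=A_{j_1}^{[1]}\cdots A_{j_n}^{[n]}$, $S:=A_{j_{n+1}}^{[n+1]}\cdots A_{j_{n+k}}^{[n+k]}$, so that $D=RS$. By cyclicity, $\operatorname{Tr}(C)=\operatorname{Tr}(QP)=\operatorname{Tr}(PQ)$, while $\operatorname{Tr}(D)=\operatorname{Tr}(RS)$ holds on the nose. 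Hence the left-hand side equals $\operatorname{Tr}(PQ)\operatorname{Tr}(RS)=\operatorname{Tr}\!\big((PQ)\otimes(RS)\big)$. Finally, applying the mixed-product rule in reverse, $(PQ)\otimes(RS)=(P\otimes R)(Q\otimes S)$, which gives exactly the right-hand side of \eqref{eq_trkl} with the factors $P\otimes R$ and $Q\otimes S$.

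The computation carries no deep obstacle; the only points requiring care are purely bookkeeping. First, one must keep track of the order reversal induced by the adjoint, so that the conjugated string $C$ is read from site $n+k$ down to $1$ and splits as $QP$ rather than $PQ$---it is precisely the cyclicity of the trace that repairs this mismatch. Second, one must identify the doubled structure correctly: the first tensor slot collects all adjointed ($i$-indexed) factors and the second slot all plain ($j$-indexed) factors, and the mixed-product rule must be applied with the slots aligned consistently. I would also flag what appears to be a typographical slip in the statement: for the identity to hold, the non-adjointed block in the second factor should read $A_{j_{n+1}}^{[n+1]}\cdots A_{j_{n+k}}^{[n+k]}$ (with $j$-indices), matching the $j$-string $D$ on the left-hand side, rather than the $i$-indices printed.
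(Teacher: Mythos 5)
Your proof is correct and follows essentially the same route as the paper's own: conjugating the trace to turn the $i$-string into a reversed adjoint string, splitting that string at site $n$ via cyclicity of the trace, and merging the two traces using $\operatorname{Tr}(C)\operatorname{Tr}(D)=\operatorname{Tr}(C\otimes D)$ together with the mixed-product rule $(PQ)\otimes(RS)=(P\otimes R)(Q\otimes S)$. Your flag of the index slip is also well taken: as printed (both in the lemma and in the paper's own proof), the non-adjointed tail block carries $i$-indices, so the identity as stated holds only when $j_{n+1},\dots,j_{n+k}$ coincide with $i_{n+1},\dots,i_{n+k}$ --- which is precisely the diagonal form in which the lemma is invoked in the proof of Theorem \ref{thm_main} --- whereas your $j$-indexed version is the correct general statement.
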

\begin{proof}
  To prove this identity, we will apply the conjugate, cyclic, and tensor product properties of the trace

  Consider the left-hand side of Equation \eqref{eq_trkl}:
  \[
  \overline{\operatorname{Tr}\left( A_{i_1}^{[1]} A_{i_2}^{[2]} \cdots A_{i_{n+k}}^{[n+k]} \right)} \operatorname{Tr}\left( A_{j_1}^{[1]} A_{j_2}^{[2]} \cdots A_{j_{n+k}}^{[n+k]} \right)
  \]
  Taking the conjugate of the trace term, we have
  \[
  \overline{\operatorname{Tr}\left( A_{i_1}^{[1]} A_{i_2}^{[2]} \cdots A_{i_{n+k}}^{[n+k]} \right)} = \operatorname{Tr}\left( {A_{i_{n+k}}^{[n+k]}}^{\dagger} \cdots {A_{i_1}^{[1]}}^{\dagger} \right)
  \]
  Using this, the expression becomes
  \[
  \operatorname{Tr}\left( {A_{i_{n+k}}^{[n+k]}}^{\dagger} \cdots {A_{i_1}^{[1]}}^{\dagger} \right) \operatorname{Tr}\left( A_{j_1}^{[1]} A_{j_2}^{[2]} \cdots A_{j_{n+k}}^{[n+k]} \right)
  \]
  By splitting the sequences at index \( n \) in the conjugated  terms, we get
  \[
  \operatorname{Tr}\left( {A_{i_{n+k}}^{[n+k]}}^{\dagger} \cdots {A_{i_1}^{[1]}}^{\dagger} \right) = \operatorname{Tr}\left( {A_{i_{n}}^{[n]}}^{\dagger} \cdots {A_{i_1}^{[1]}}^{\dagger} {A_{i_{n+k}}^{[n+k]}}^{\dagger} \cdots {A_{i_{n+1}}^{[n+1]}}^{\dagger} \right)
  \]
  Finally, by using the  tensor property of the trace, we can combine the two traces on the left-hand side of Equation \eqref{eq_trkl} under a single trace as
  \[
  \operatorname{Tr}\left[ \left( {A_{i_{n}}^{[n]}}^{\dagger} \cdots {A_{i_1}^{[1]}}^{\dagger} \otimes A_{j_1}^{[1]} \cdots A_{j_{n}}^{[n]} \right) \left(
   {A_{i_{n+k}}^{[n+k]}}^{\dagger} \cdots {A_{i_{n+1}}^{[n+1]}}^{\dagger} \otimes A_{i_{n+1}}^{[n+1]} \cdots A_{i_{n+k}}^{[n+k]} \right) \right]
  \]
  This completes the proof. \hfill $\square$
\end{proof}

\begin{theorem}\label{thm_main}
Let \( \{A_{i_n}^{[n]}\}_{n} \) be a family of \( m \times m \) matrices. Assume that for each \( n \in \mathbb{N} \) and every \( i \in \{1, 2, \dots, d\} \), the following conditions are satisfied:
  \begin{equation}\label{eq_normPs1=1}
    \sum_{i}\Big|\operatorname{Tr}\big(A_{i}^{[1]}\big)\Big| = 1
  \end{equation}
and
\begin{equation}\label{eq_operator_identity}
    \sum_{j} \left( A_{j}^{[n+1]} \right)^{\dagger} \left( A_{i}^{[n]} \right)^{\dagger} \otimes A_{i}^{[n]} A_{j}^{[n+1]} = \left( A_{i}^{[n]} \right)^{\dagger} \otimes A_{i}^{[n]}, \quad \forall n \in \mathbb{N}
\end{equation}
Under these assumptions, the sequence \( (\varphi_n) \) is projective and converges in the strongly finite sense to a state $\varphi$ on $\mathcal{B}_{\mathbb{N}}$ . Furthermore, for any $N\in\mathbb{N},$ for any local observable \( X \in \mathcal{B}_{[1,N]} \), the following  expression holds:
\begin{equation}\label{eq_state_limit}
    \varphi(X) = \sum_{\substack{i_1, i_2, \dots, i_N \\ j_1, j_2, \dots, j_N}} \sum_{\ell}\overline{\operatorname{Tr}\left( A_{i_1}^{[1]} \cdots  A_{i_N}^{[N]}A_{\ell}^{[N+1]} \right)} \operatorname{Tr}\left( A_{j_1}^{[1]} \cdots  A_{j_N}^{[N]} A_{\ell}^{[N+1]} \right)
\end{equation}
\[
\times \langle i_1, i_2, \dots, i_n | X | j_1, j_2, \dots, j_n \rangle
\]
defining a functional \( \varphi \) that extends to a state on the algebra \( \mathcal{B}_{\mathbb{N}} \).
\end{theorem}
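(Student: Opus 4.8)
The plan is to realise each finite-volume functional as a vector state and to read the operator identity \eqref{eq_operator_identity} as a fixed-point (telescoping) relation. Concretely, I would take $\varphi_n(X) = \langle \psi_{n+1} \mid (X \otimes \id_{n+1}) \mid \psi_{n+1}\rangle$ for $X \in \mathcal{B}_{[1,n]}$, which is exactly the right-hand side of \eqref{eq_state_limit} at level $N=n$ with $\ell = i_{n+1} = j_{n+1}$; the extra ``padding'' site carrying the identity is what makes \eqref{eq_operator_identity} applicable and should not be dropped. Positivity of each $\varphi_n$ is then immediate: for $X = Y^*Y$ one has $\varphi_n(Y^*Y) = \|(Y \otimes \id_{n+1})\,|\psi_{n+1}\rangle\|^2 \ge 0$, so $\varphi_n$ is a positive linear functional on $\mathcal{B}_{[1,n]}$.

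For normalisation I would compute $\varphi_n(\id) = \langle \psi_{n+1}\mid\psi_{n+1}\rangle = \sum_{i_1,\dots,i_{n+1}} |\Tr(A_{i_1}^{[1]}\cdots A_{i_{n+1}}^{[n+1]})|^2$ and telescope it downward. Rewriting each squared modulus as a single trace on the doubled space $\mathbb{C}^m \otimes \mathbb{C}^m$ via the Lemma, the sum over the outermost index collapses by \eqref{eq_operator_identity} (here all indices coincide, since we evaluate a norm), giving $\langle\psi_{m}\mid\psi_{m}\rangle = \langle\psi_{m-1}\mid\psi_{m-1}\rangle$ for every $m$; iterating down to the base level and invoking \eqref{eq_normPs1=1} yields $\varphi_n(\id)=1$. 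Hence each $\varphi_n$ is a state.

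The heart of the argument is projectivity, $\varphi_{n+1}\big|_{\mathcal{B}_{[1,n]}} = \varphi_n$. Given $X \in \mathcal{B}_{[1,n]}$, I would expand $\varphi_{n+1}(X \otimes \id_{n+1}) = \langle \psi_{n+2}\mid (X \otimes \id_{n+1} \otimes \id_{n+2})\mid\psi_{n+2}\rangle$. Because sites $n+1$ and $n+2$ both carry the identity, the corresponding bra/ket indices are forced to agree ($i_{n+1}=j_{n+1}$ and $i_{n+2}=j_{n+2}$); applying the Lemma to separate the block $[1,n]$ from the two padding sites and then summing the outermost index $i_{n+2}$, the adjacent matching index at site $n+1$ is precisely the configuration in which \eqref{eq_operator_identity} fires, collapsing the $A^{[n+2]}$ factors and leaving the level-$(n+1)$ expression $\langle\psi_{n+1}\mid (X\otimes \id_{n+1})\mid\psi_{n+1}\rangle = \varphi_n(X)$. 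The main obstacle is exactly this bookkeeping: \eqref{eq_operator_identity} only applies when the neighbouring site indices coincide, which is guaranteed here by the identity padding but would genuinely fail for the naive vector state $\langle\psi_n\mid X\mid\psi_n\rangle$ built without the extra site, since there the site-$n$ indices carried by $X$ need not match.

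Projectivity makes the rest routine. Strong finite convergence is immediate: for $X \in \mathcal{B}_{[1,N]}$ the consistency $\varphi_{n+1}|_{\mathcal{B}_{[1,n]}}=\varphi_n$ forces $\varphi_n(X)=\varphi_N(X)$ for all $n \ge N$, so one may take $N_X = N$. The strongly-finite limit then defines $\varphi$ on $\mathcal{B}_{\mathbb{N},\mathrm{loc}}$ by $\varphi(X) := \varphi_N(X)$ for any $N$ with $X \in \mathcal{B}_{[1,N]}$, which is well defined, positive, and unital by the preceding steps, and formula \eqref{eq_state_limit} is just this definition written out at level $N$. Finally, since $\varphi$ is a positive unital functional on the unital normed $*$-algebra $\mathcal{B}_{\mathbb{N},\mathrm{loc}}$, it satisfies $\|\varphi\| = \varphi(\id) = 1$, hence is bounded and extends uniquely by continuity to the $C^*$-completion $\mathcal{B}_{\mathbb{N}}$, the extension remaining positive and normalised, i.e.\ a state.
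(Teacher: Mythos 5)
Your proposal is correct and follows essentially the same route as the paper's proof: realising each $\varphi_n$ as the vector state of $|\psi_{n+1}\rangle$ with the extra padding site, rewriting coefficients on the doubled space $\mathbb{C}^m\otimes\mathbb{C}^m$ via the trace lemma, and telescoping with the consistency condition \eqref{eq_operator_identity} to get normalisation and projectivity before extending to $\mathcal{B}_{\mathbb{N}}$. The only differences are organisational --- you prove one-step projectivity and deduce $\varphi_{n+k}(X)=\varphi_N(X)$, where the paper collapses the $(n+k)$-level expression directly, and you spell out positivity and the $\|\varphi\|=\varphi(\id)=1$ extension argument that the paper leaves implicit.
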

\begin{proof}
We begin by verifying that the sequence of functionals \( \varphi_n \) is projective. For this, consider:
\begin{align*}
\varphi_n(\id) & = \|\psi_{n+1}\|^2 \\
 & = \sum_{i_1, i_2, \dots, i_{n+1}} \overline{\operatorname{Tr}\left( A_{i_1}^{[1]} A_{i_2}^{[2]} \cdots A_{i_{n+1}}^{[n+1]} \right)} \operatorname{Tr}\left( A_{i_1}^{[1]} A_{i_2}^{[2]} \cdots A_{i_{n+1}}^{[n+1]} \right)
\end{align*}
Using identity \eqref{eq_trkl}, this expression rewrites as
\[
\sum_{i_1, i_2, \dots, i_{n+1}} \operatorname{Tr}\left( \left( {A_{i_{n-1}}^{[n-1]}}^{\dagger} \cdots {A_{i_{1}}^{[1]}}^{\dagger} \otimes A_{i_1}^{[1]} \cdots A_{i_{n-1}}^{[n-1]} \right) \left( {A_{i_{n+1}}^{[n+1]}}^{\dagger} {A_{i_{n}}^{[n]}}^{\dagger} \otimes A_{i_{n}}^{[n]} A_{i_{n+1}}^{[n+1]} \right) \right)
\]
Applying the consistency condition \eqref{eq_operator_identity}, we find
\[
\sum_{i_1, i_2, \dots, i_{n}} \operatorname{Tr}\left( \left( {A_{i_{n-1}}^{[n-1]}}^{\dagger} \cdots {A_{i_{1}}^{[1]}}^{\dagger} \otimes A_{i_1}^{[1]} \cdots A_{i_{n-1}}^{[n-1]} \right) \left( {A_{i_{n}}^{[n]}}^{\dagger} \otimes A_{i_{n}}^{[n]} \right) \right) = \|\psi_n\|^2 = \varphi_{n-1}(\id)
\]
By iterating this relation, we conclude that
\[
\varphi_n(\id) = \varphi_{n-1}(\id) = \cdots = \varphi_1(\id) \overset{\eqref{eq_normPs1=1}}{=} 1
\]
demonstrating that each \( \varphi_n \) is normalized, hence defining a state

Next, we consider a local observable \( X = X_1 \otimes X_2 \otimes \cdots \otimes X_n \). For any \( k \), the functional \( \varphi_{n+k}(X) \) is given by
\begin{align*}
\varphi_{n+k}(X) &= \sum_{\substack{i_1, i_2, \dots, i_{n+k+1} \\ j_1, j_2, \dots, j_{n+k+1}}} \overline{\operatorname{Tr}\left( A_{i_1}^{[1]} A_{i_2}^{[2]} \cdots A_{i_{n+k+1}}^{[n+k+1]} \right)} \operatorname{Tr}\left( A_{j_1}^{[1]} A_{j_2}^{[2]} \cdots A_{j_{n+k+1}}^{[n+k+1]} \right) \\
&\qquad \times \langle i_1, i_2, \dots, i_{n+k+1} | X \otimes \id | j_1, j_2, \dots, j_{n+k+1} \rangle
\end{align*}
This can be expanded as
\[
\sum_{\substack{i_1, i_2, \dots, i_n \\ j_1, j_2, \dots, j_n}} C\big({\substack{i_1, i_2, \dots, i_n \\ j_1, j_2, \dots, j_n}}\big) \prod_{\ell=1}^n \langle i_\ell | X_\ell | j_\ell \rangle
\]
where
\[
C\big({\substack{i_1, i_2, \dots, i_n \\ j_1, j_2, \dots, j_n}}\big) := \sum_{i_{n+1}, \dots, i_{n+k+1}} \overline{\operatorname{Tr}\left( A_{i_1}^{[1]} \cdots A_{i_n}^{[n]} A_{i_{n+1}}^{[n+1]} \cdots A_{i_{n+k+1}}^{[n+k+1]} \right)}
\]
\[
\qquad \times\operatorname{Tr}\left( A_{j_1}^{[1]} \cdots A_{j_n}^{[n]} A_{i_{n+1}}^{[n+1]} \cdots A_{i_{n+k+1}}^{[n+k+1]} \right)
\]
Applying \eqref{eq_trkl} and simplifying iteratively as before, we find that \( C\big({\substack{i_1, i_2, \dots, i_n \\ j_1, j_2, \dots, j_n}}\big) \) reduces to
\[
\sum_{i_{n+1}} \overline{\operatorname{Tr}\left( A_{i_1}^{[1]} \cdots A_{i_n}^{[n]} A_{i_{n+1}}^{[n+1]} \right)} \operatorname{Tr}\left( A_{j_1}^{[1]} \cdots A_{j_n}^{[n]} A_{i_{n+1}}^{[n+1]} \right)
\]
Thus, we obtain
\[
\varphi_{n+k}(X) = \sum_{\substack{i_1, i_2, \dots, i_n, i_{n+1} \\ j_1, j_2, \dots, j_n}} \overline{\operatorname{Tr}\left( A_{i_1}^{[1]} \cdots A_{i_n}^{[n]} A_{i_{n+1}}^{[n+1]} \right)} \operatorname{Tr}\left( A_{j_1}^{[1]} \cdots A_{j_n}^{[n]} A_{i_{n+1}}^{[n+1]} \right)
\]
\[ \times \prod_{\ell=1}^n \langle i_\ell | X_\ell | j_\ell \rangle = \varphi_n(X)
\]
Therefore, \( (\varphi_n) \) is a projective sequence, and the limit in \eqref{eq_state_limit} exists for all local observables \( X \). Consequently, \( \varphi \) defines a state on the local algebra \( \mathcal{B}_{\mathbb{N}, \text{loc}} \), which naturally extends to a state on the full algebra \( \mathcal{B}_{\mathbb{N}} \). \hfill \( \square \)
\end{proof}

\begin{remark}
This result offers a structured approach to defining states on an infinite tensor product space, where the matrix product structure satisfies clear consistency conditions. In particular, Equations \eqref{eq_normPs1=1} and \eqref{eq_operator_identity} align with consistency conditions studied in the context of QMCs on trees, especially for Pauli models associated with phase transitions \cite{MBS161}. This parallel opens a potential pathway for extending Theorem \ref{thm_main} to  MPS on tree structures, with possible applications in quantum information theory and statistical mechanics.
\end{remark}

 \begin{corollary}\label{coro}
     Under the assumptions and notations of Theorem \ref{thm_main}, the local density matrix associated with the state $\varphi$ on the algebra $\mathcal{B}_{[1,N]}$ is expressed as
     \begin{equation}\label{eq_rho}
         \rho_{[1,N]} = \Tr_{N]}\big(|\psi_{N+1}\rangle \langle \psi_{N+1}|\big)
     \end{equation}
     where $|\psi_{N+1}\rangle$ denotes the $(N+1)^{\text{th}}$ matrix product state as defined in equation (\ref{psn_n}).
\end{corollary}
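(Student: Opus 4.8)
The plan is to verify directly that the operator $\rho_{[1,N]} := \Tr_{N]}\big(|\psi_{N+1}\rangle\langle\psi_{N+1}|\big)$ represents the restriction of $\varphi$ to $\mathcal{B}_{[1,N]}$, i.e.\ that $\varphi(X)$ equals the trace pairing of $\rho_{[1,N]}$ against $X$ for every $X\in\mathcal{B}_{[1,N]}$; this is the defining property of the local density matrix, and uniqueness of the representing operator then yields \eqref{eq_rho}.

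First I would record the vector-state form of $\varphi$ on the region $[1,N]$. By projectivity and strongly finite convergence (Theorem \ref{thm_main}) we have $\varphi(X)=\varphi_N(X)$ for $X\in\mathcal{B}_{[1,N]}$, and comparing with the explicit expression \eqref{eq_state_limit} one sees that
$$\varphi(X)=\langle\psi_{N+1}\,|\,(X\otimes\id_{N+1})\,|\,\psi_{N+1}\rangle.$$
The justification here is pure bookkeeping: expanding $|\psi_{N+1}\rangle$ through \eqref{psn_n}, the overlap $\langle\psi_{N+1}|(X\otimes\id)|\psi_{N+1}\rangle$ produces a Kronecker delta $\delta_{i_{N+1}j_{N+1}}$ on the $(N+1)$-th site, which collapses the two free indices there to a single summed index $\ell$; relabelling $i\leftrightarrow j$ then matches the conjugation pattern of \eqref{eq_state_limit} term by term.

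Next I would convert this overlap into a partial-trace expression. Writing it as the trace of $(X\otimes\id_{N+1})|\psi_{N+1}\rangle\langle\psi_{N+1}|$ and using cyclicity together with the fact that $X$ acts trivially on site $N+1$, the trace over the last factor may be performed first; by the definition \eqref{eq_Trn} of $\Tr_{N]}$ this is exactly the pairing of $X$ against $\Tr_{N]}(|\psi_{N+1}\rangle\langle\psi_{N+1}|)=\rho_{[1,N]}$. Equivalently, and perhaps more transparently for this paper's style, one can argue by coefficient matching: expanding $|\psi_{N+1}\rangle\langle\psi_{N+1}|$ and applying $\Tr_{N]}$ per \eqref{eq_Trn} forces $i_{N+1}=j_{N+1}=\ell$, and the resulting matrix element $\langle i_1,\dots,i_N|\rho_{[1,N]}|j_1,\dots,j_N\rangle$ is read off to coincide with the coefficient of $\langle i_1,\dots,i_N|X|j_1,\dots,j_N\rangle$ in \eqref{eq_state_limit}.

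I expect the one genuine subtlety to be keeping the normalisations consistent, and this is where the main care is needed. Because $\Tr$ here denotes the \emph{normalised} trace $\Tr(X)=\tfrac1d\sum_i\langle i|X|i\rangle$, the operation $\Tr_{N]}$ carries a factor $\tfrac1d$ for the site traced out; one must therefore track the powers of $d$ on both sides so that the factor introduced by tracing out site $N+1$ is matched by the normalisation implicit in the pairing used to define the local density matrix. Once this convention is fixed, the two sides agree and \eqref{eq_rho} follows. The remaining properties are then automatic: positivity of $\rho_{[1,N]}$ holds because it is a partial trace of the positive operator $|\psi_{N+1}\rangle\langle\psi_{N+1}|$, and its correct normalisation follows from $\varphi(\id)=1$ as established in Theorem \ref{thm_main}.
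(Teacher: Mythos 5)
Your proof follows essentially the same route as the paper's: both establish $\varphi(X)=\langle\psi_{N+1}|\,(X\otimes\id_{N+1})\,|\psi_{N+1}\rangle$ by comparing with \eqref{eq_state_limit}, then rewrite this as the trace pairing of $X$ against $\Tr_{N]}\big(|\psi_{N+1}\rangle\langle\psi_{N+1}|\big)$ via the definition \eqref{eq_Trn}. Your extra attention to the normalised-trace convention is warranted rather than a divergence --- the paper silently treats $\Tr$ as unnormalised when passing from the vector-state form to the trace pairing, so tracking the $1/d$ factor is the one point where your write-up is actually more careful than the original.
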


\begin{proof}
    By Equation (\ref{eq_state_limit}), for any operator \( X \in \mathcal{B}_{[1,N]} \), the state \(\varphi(X)\) can be expanded as
    \begin{eqnarray*}
        \varphi(X) &= \sum_{\substack{i_1, i_2, \dots, i_N, i_{N+1} \\ j_1, j_2, \dots, j_N, j_{N+1}}} \overline{\operatorname{Tr}\left( A_{i_1}^{[1]} \cdots A_{i_N}^{[N]} A_{i_{N+1}}^{[N+1]} \right)} \operatorname{Tr}\left( A_{j_1}^{[1]} \cdots A_{j_N}^{[N]} A_{j_{N+1}}^{[N+1]} \right)\\
        &\quad \times \langle i_1, \dots, i_{N+1} | X \otimes \id_{N+1} | j_1, \dots, j_{N+1} \rangle
    \end{eqnarray*}

    This expression simplifies to
    \[
    \varphi(X) = \langle \psi_{N+1} | X \otimes \id_{N+1} | \psi_{N+1} \rangle
    \]
    Rewriting in terms of the trace, we find:
    \[
    \varphi(X) = \operatorname{Tr}\left(|\psi_{n+1}\rangle \langle \psi_{n+1}| X \otimes \id_{N+1}\right)
    \]
    Using the partial trace \(\Tr_{n]}\) to trace out the last component, we obtain:
    \[
    \varphi(X) = \operatorname{Tr}\left(\Tr_{n]}\big(|\psi_{n+1}\rangle \langle \psi_{n+1}|\big) X\right)
    \]
    This completes the proof. \hfill $\square$
\end{proof}

\section{Application to the GHZ State}\label{Sect_GHZ}

In this section, we apply the main theorem to the GHZ state, a prototypical example of a maximally entangled state, and construct the necessary Kraus operators \( K_i \) to satisfy the theorem’s conditions. We aim to capture the entanglement structure and the non-Markovian behavior of the GHZ state using the formalism of quantum Markov chains.

Let \( \mathcal{H} = \mathbb{C}^2 \) be a two-dimensional Hilbert space with the standard orthonormal basis given by
\[
|0\rangle = \begin{pmatrix} 1 \\ 0 \end{pmatrix}, \quad |1\rangle = \begin{pmatrix} 0 \\ 1 \end{pmatrix}
\]
The algebra of bounded operators on \( \mathcal{H} \) is denoted by \( \mathcal{B} = \mathcal{B}(\mathcal{H}) \), and the quasi-local algebra for an infinite sequence of qubits is defined by the infinite tensor product
\[
\mathcal{B}_{\mathbb{N}} = \bigotimes_{n \in \mathbb{N}} \mathcal{B}
\]
The \( n \)-qubit GHZ state is defined as
\[
|\text{GHZ}\rangle_n = \frac{1}{\sqrt{2}} \left( |0\rangle^{\otimes n} + |1\rangle^{\otimes n} \right)
\]
where \( |0\rangle^{\otimes n} \) and \( |1\rangle^{\otimes n} \) are the states with all qubits in \( |0\rangle \) and \( |1\rangle \), respectively.

We define a set of matrices \( \{A_i^{[n]}\}_{i=1,2} \) for each site \( n \) to represent the local operations associated with each qubit. These matrices will satisfy the conditions required by the main theorem. Define
\begin{equation} \label{eq_A1}
  A_1^{[1]} =  \frac{1}{\sqrt{2}}\begin{pmatrix} 1 & 0 \\ 0 & 0 \end{pmatrix}, \quad  A_1^{[n]} =  \begin{pmatrix} 1 & 0 \\ 0 & 0 \end{pmatrix} \quad \forall n \geq 2
\end{equation}
and
\begin{equation} \label{eq_A2}
  A_2^{[1]} =  \frac{1}{\sqrt{2}}\begin{pmatrix} 0 & 0 \\ 0 & 1 \end{pmatrix}, \quad  A_2^{[n]} =  \begin{pmatrix} 0 & 0 \\ 0 & 1 \end{pmatrix} \quad \forall n \geq 2
\end{equation}
These matrices act as diagonal projections and encode the information necessary to distinguish between the two components of the GHZ state.
The matrices $\{ A_{i_n}^{[n]}; n\ge 1, i_n =1,2\}$ satisfies the normality condition (\ref{eq_normPs1=1})  and the consistency condition (\ref{eq_operator_identity}):
\[
\sum_{j=1}^{2} \left( A_{j}^{[n+2]} \right)^{\dagger} \left( A_{i}^{[n+1]} \right)^{\dagger} \otimes A_{i}^{[n+1]} A_{j}^{[n+2]} = \left( A_{i}^{[n+1]} \right)^{\dagger} \otimes A_{i}^{[n+1]}
\]
From Theorem \ref{thm_main}, the limiting state \( \varphi_{GHZ} \)  exists and its local correlations are given by
\begin{eqnarray*}
    \varphi_{GHZ}(X) &=\sum_{\ell=1}^{2} \sum_{\substack{i_1, i_2, \dots, i_N \\ j_1, j_2, \dots, j_N}} \overline{\operatorname{Tr}\left( A_{i_1}^{[1]} \cdots  A_{i_N}^{[N]}A_{\ell}^{[N+1]} \right)} \operatorname{Tr}\left( A_{j_1}^{[1]} \cdots  A_{j_N}^{[N]} A_{\ell}^{[N+1]} \right)\\
&\times \langle i_1, i_2, \dots, i_n | X | j_1, j_2, \dots, j_n \rangle,
\end{eqnarray*}
The  matrices defined in \eqref{eq_A1} and \eqref{eq_A2} satisfy:
\[
\operatorname{Tr}\left( A_{i_1}^{[1]} A_{i_2}^{[2]} \cdots A_{i_N}^{[N]} A_{\ell}^{[N+1]}\right) = \frac{1}{\sqrt{2}} \delta_{i_1, i_2, \dots, i_N, \ell}
\]
where
\[
\delta_{i_1, i_2, \dots, i_N} = \begin{cases}
1 & \text{if } i_1 = i_2 = \cdots = i_N =\ell \\
0 & \text{otherwise,}
\end{cases}
\]
It follows that
\begin{equation} \label{eq_phi_GHZ}
\varphi_{GHZ}(X_1 \otimes X_2 \otimes \cdots \otimes X_N) = \frac{1}{2} \sum_{\ell=1}^{2} x_{1;\ell\ell} x_{2;\ell\ell} \cdots x_{N;\ell\ell}
\end{equation}
where \( x_{m;ij} = \langle i | X_m | j \rangle \).

The GHZ state, expressed as a vector state, is given by
\[
|GHZ\rangle_N = \sum_{i_1, i_2, \dots, i_N} \operatorname{Tr}(A_{i_1}^{[1]} A_{i_2}^{[2]} \cdots A_{i_N}^{[N]}) |i_1, i_2, \dots, i_N\rangle = \frac{1}{\sqrt{2}} \left( |0\rangle^{\otimes N} + |1\rangle^{\otimes N} \right)
\]
For (\ref{eq_rho}) the local density matrix of $\varphi_{GHZ}$ on the algebra $\mathcal{B}_{[1,n]}$ is given by
\begin{equation}
   \rho_{N; GHZ} = \Tr_{N]}|GHZ\rangle\langle GHZ|_{N+1} =  \frac{1}{\sqrt{2}} \left( |0\rangle\langle 0|^{\otimes N} + |1\rangle\langle 1|^{\otimes N} \right)
\end{equation}
The von Neumann entropy of the local density matrix $\rho_{N; GHZ}$ is
\[
S( \rho_{N; GHZ}) = \log2
\]
which reflects maximal entanglement in the state $\varphi_{GHZ}$, with each qubit perfectly linked to the others. This entanglement is constant over time, showing a stable, unchanging connection across the qubits.

\section{Conclusion}\label{Sect_conclu}

This paper presents a rigorous framework for representing MPS as states on infinite tensor products of matrix algebras. This new perspective allows for a clear comparison with other classes of states, such as QMCs and entangled Markov chains, and enables precise analysis of time-dependence in MPS dynamics.

A key result of this framework is its successful application to the GHZ state, demonstrating that matrix product representations can capture complex, non-local entanglement that challenges traditional Markovian models.

Looking ahead, this approach opens exciting avenues for testing tensor networks against quantum Markovianity. Future work could explore the boundaries of Markovian and non-Markovian behaviors in hidden quantum models, extend this framework to higher-dimensional systems, and examine phase transitions within quasi-local tensor networks.

This style now matches the format you specified, with author initials before the year, journal names, volume numbers, and pages.
\end{document}